\documentclass[conference]{IEEEtran}
\IEEEoverridecommandlockouts
% The preceding line is only needed to identify funding in the first footnote. If that is unneeded, please comment it out.
\usepackage{cite}
\usepackage{amsmath,amssymb,amsfonts}
\usepackage{algorithmic}
\usepackage{graphicx}
\usepackage{textcomp}
\usepackage{xcolor}

\usepackage{bm}
\usepackage{amsthm}
\usepackage[linesnumbered,ruled,vlined]{algorithm2e}
\usepackage{multirow}
\usepackage{caption}
\usepackage{hyperref}

\def\BibTeX{{\rm B\kern-.05em{\sc i\kern-.025em b}\kern-.08em
    T\kern-.1667em\lower.7ex\hbox{E}\kern-.125emX}}

\newtheorem{remark}{Remark}
\newcommand{\tensor}{\otimes}

\newcommand{\ket}[1]{|#1\rangle}
\newcommand{\bra}[1]{\langle #1 |}
\newcommand{\tr}{\textnormal{tr}}

\newcommand{\parity}{\textnormal{Par}}
\newcommand{\std}{\textnormal{std}}
\newcommand{\diag}{\textnormal{diag}}
\newcommand{\Cov}{\textnormal{Cov}}
\newtheorem{definition}{Definition}
\newtheorem{proposition}{Proposition}
\newcommand{\Var}{\textnormal{Var}}
\newcommand{\inv}{^{-1}}

\begin{document}

\title{Online Detection of Golden Circuit Cutting Points\\
\thanks{This research was supported in part by NSF Awards 2216923, 2117439, and 2230111.}
}

\author{\IEEEauthorblockN{Daniel T. Chen}
\IEEEauthorblockA{\textit{Case Western Reserve University}\\
Cleveland, OH, USA \\
txc461@case.edu}
\and
\IEEEauthorblockN{Ethan H. Hansen}
\IEEEauthorblockA{\textit{Case Western Reserve University}\\
Cleveland, OH, USA \\
ehh50@case.edu}
\and
\IEEEauthorblockN{Xinpeng Li}
\IEEEauthorblockA{\textit{Case Western Reserve University}\\
Cleveland, OH, USA \\
xxl1337@case.edu}
\and
\IEEEauthorblockN{Aaron Orenstein}
\IEEEauthorblockA{\textit{Case Western Reserve University}\\
Cleveland, OH, USA \\
aao62@case.edu}
\and
\IEEEauthorblockN{Vinooth Kulkarni}
\IEEEauthorblockA{\textit{Case Western Reserve University}\\
Cleveland, OH, USA \\
vxk285@case.edu}
\and
\IEEEauthorblockN{Vipin Chaudhary}
\IEEEauthorblockA{\textit{Case Western Reserve University}\\
Cleveland, OH, USA \\
vxc204@case.edu}
\and
\IEEEauthorblockN{Qiang Guan}
\IEEEauthorblockA{\textit{Kent State University}\\
Kent, OH, USA \\
qguan@kent.edu}
\and
\IEEEauthorblockN{Ji Liu}
\IEEEauthorblockA{\textit{Argonne National Laboratory}\\
Lemont, IL, USA \\
ji.liu@anl.gov}
\and
\IEEEauthorblockN{Yang Zhang}
\IEEEauthorblockA{\textit{University of Illinois Urbana-Champaign}\\
Champaign, IL, USA \\
yzhangnd@illinois.edu}
\and
\IEEEauthorblockN{Shuai Xu}
\IEEEauthorblockA{\textit{Case Western Reserve University}\\
Cleveland, OH, USA \\
sxx214@case.edu}
}

\maketitle

\begin{abstract}
    Quantum circuit cutting has emerged as a promising method for simulating large quantum circuits using a collection of small quantum machines.
    Running low-qubit circuit ``fragments'' not only overcomes the size limitation of near-term hardware, but it also increases the fidelity of the simulation.
    However, reconstructing measurement statistics requires computational resources---both classical and quantum---that grow exponentially with the number of cuts. 
    In this manuscript, we introduce the concept of a golden cutting point, which identifies unnecessary basis components during reconstruction and avoids related downstream computation. 
    We propose a hypothesis-testing scheme for identifying golden cutting points, and provide robustness results in the case of the test failing with low probability.
    Lastly, we demonstrate the applicability of our method on Qiskit's Aer simulator and observe a reduced wall time from identifying and avoiding obsolete measurements.
\end{abstract}

\begin{IEEEkeywords}
quantum circuit cutting, circuit cutting, circuit knitting, circuit reconstruction, hypothesis-testing, golden cutting point
\end{IEEEkeywords}

\section{Introduction}
\label{sec:intro}

Quantum circuit cutting refers to the method of splitting quantum circuits into a set of small independent circuit fragments~\cite{Peng_2020}. Using circuit cutting methods, large quantum circuits can be simulated by a collection of smaller machines, barring some addition classical computing resources. Moreover, it was also empirically shown that cutting the circuit reduces the affect of noise~\cite{ayral2020quantum, ayral2021quantum} and can be used for error mitigation~\cite{liu2022classical}. There has also been work on properly accounting for statistical shot noise~\cite{Perlin_2020} and adaptation to specific problems such as combinatorial optimization~\cite{saleem2021quantum}. Thus, this technique holds great promise for resolving many practical issues with utilizing quantum hardware, particularly in the NISQ era~\cite{preskill2018quantum}.

However, circuit cutting suffers greatly in practice as the runtime grows exponentially with the number of cuts. Akin to quantum state tomography, circuit cutting works by classically tracking all quantum degrees of freedom at the cut locations. Thus, exponential scaling comes naturally as the quantum state of interest grows. There have been many efforts to reduce this cost through randomized measurements~\cite{Lowe_2022, chen2022quantum}, classical sampling~\cite{Chen_2022}, and variational optimization~\cite{Uchehara2022}. Nonetheless, the exponential growth in runtime is unlikely to vanish without imposing structural assumptions on the circuit. Alternatively, finding applications of circuit cutting that avoid the scaling issue, as demonstrated in~\cite{liu2022classical}, also remains a problem of interest.

In this manuscript, we build upon our previous work~\cite{chen2023efficient} and propose an algorithm for online detection of neglectable basis elements during circuit cutting. In~\cite{chen2023efficient}, we showed that some reconstruction procedures can be sped up if we impose extra assumptions on the circuit---namely, whether a basis element can be neglected. However, such assumptions cannot be easily detected \textit{a priori}. Thus, we propose a hypothesis testing scheme at each cut location that, at no additional cost in run time, identifies whether there is statistically significant evidence against the assumption being true. We empirically demonstrate the viability of our method on the Qiskit simulator~\cite{Qiskit}, and examine scaling effects with respect to important algorithm parameters. 

The outline of the paper is as follows. Section \ref{sec:theory} re-derives the circuit cutting in the general bipartition case and introduces the concept of a ``golden cut''---a cut location that has basis elements which can be neglected. We also show the algorithm for detecting golden cuts as well as their statistical properties. In Section \ref{sec:experiments}, we demonstrate the applicability of our method in a simple, one-cut case. Meanwhile, we explore additional properties of the proposed algorithm under varying parameters. Lastly, we discuss some future directions in Section \ref{sec:conclusion}.

\section{Theory}
\label{sec:theory}

\begin{figure*}[t]
    \centering
    \includegraphics[width=0.9\linewidth]{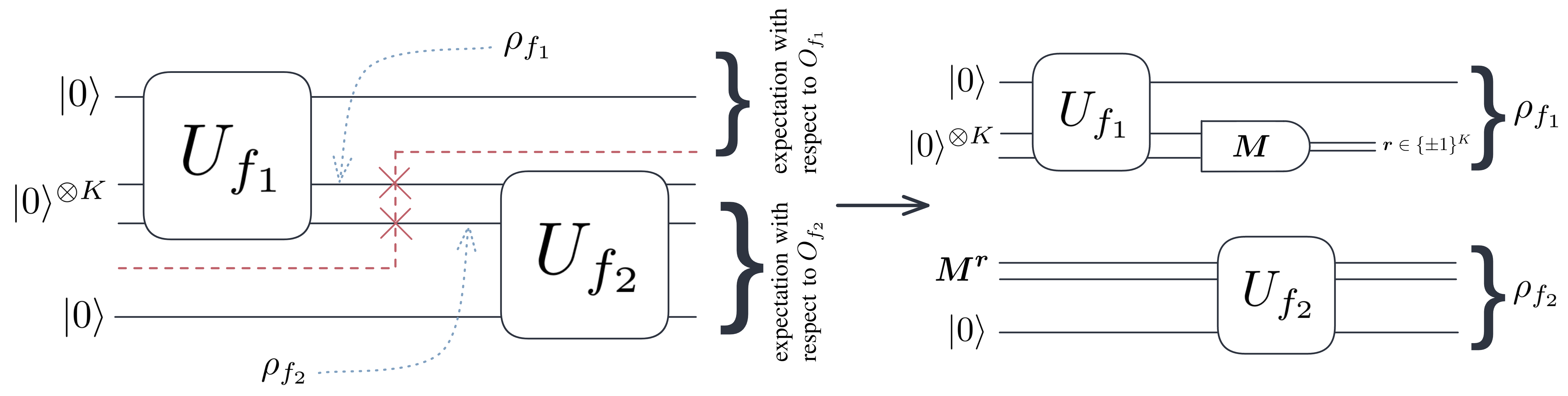}
    \caption{An example of circuit bipartition using $K$ cuts. The circuit is comprised of arbitrary unitary gates $U_{f_1}$ and $U_{f_2}$ such that the middle $K$ qubits are possibly entangled. The circuit can be split into two fragments by performing $K$ cuts between the two gates, and classically recombining measurement outcomes in basis $\bm M$ and preparing eigenstates of $\bm M$ for a collection of operators that forms a basis over density matrices.}
    \label{fig:cut_diagram}
\end{figure*}

We start the section by briefly introducing circuit cutting only for the bipartite case. Readers are encouraged to consult~\cite{Peng_2020, Perlin_2020, chen2022quantum} for more detailed derivations and/or general treatment of circuit cutting. Section \ref{sec:bipartite} ends with the definition of golden cuts---circuit structure that induces neglectable basis elements. Then, Section \ref{sec:online-detect} derives an online detection algorithm for these structures.

\subsection{Circuit Bipartition} \label{sec:bipartite}

Suppose an $N$-qubit quantum circuit induces a state $\rho$. Further suppose we want to perform $K$ cuts on a quantum circuit to divide it into two fragments, $f_1$ and $f_2$ (see Figure \ref{fig:cut_diagram}). Indexing each cut by an integer in $[K]$, then we can write the cutting scheme as an injective function $c: [K] \to [N]$ that maps a cut to the respective qubit being cut. The premise of circuit cutting is to rewrite $\rho$ in terms of the states induced by the circuit fragments, $\rho_{f_1}$ and $\rho_{f_2}$, albeit with some parameterization.

As first shown in Peng \textit{et al.}~\cite{Peng_2020}, this can be done via expanding $\rho$ at the location of the cut. That is, for a basis set over $2 \times 2$ Hermitian matrices, $\mathcal B =\{I, X, Y, Z\}$, the following decomposition holds:
\begin{align}
    \rho = \frac{1}{2^K} \sum_{\bm M \in \mathcal B^K} \rho_{f_1}(\bm M) \tensor \rho_{f_2}(\bm M)
\end{align}
where 
\begin{align}
    \bm M = \begin{pmatrix} M_{c(1)} & M_{c(2)} & \dots & M_{c(K)} \end{pmatrix}, ~~ M_{c(i)} \in \mathcal B.
\end{align} 

The state of each fragment $\rho_{f_i}$, $i = 1,2, \dots$ is parameterized by an operator $\bm M$ and depends on the particular gates contained in the circuit. Let $U_{f_i}$ denote the unitary operation induced by the quantum gates on each fragment, and let $\ket{\bm 0}$ be an $N_{f_1}$-qubit ``zero'' state. Then, we can write $\rho_{f_i}$ as the following (up to appropriate qubit permutation):
\begin{align}
    \rho_{f_1}(\bm M) &= \tr_{c(1), \dots, c(K)} \left( \bigotimes_{i \in [K]} M_{c(i)} U_{f_1} \ket{\bm 0}\bra{\bm 0} U_{f_1}^\dagger \right),\\
    \rho_{f_2}(\bm M) &= U_{f_2} \left( \bigotimes_{i \in [K]} M_{c(i)} \tensor \ket{\bm 0}\bra{\bm 0} \right) U_{f_2}^\dagger.
\end{align}
The choice of basis is arbitrary, and we chose the normalized Pauli basis for simplicity. Note that the above equation lacks a physical interpretation as elements in $\mathcal B$ are traceless (except for $I$) and hence, are not quantum states.

% By cutting the $K$ wires, the circuit can be bipartitioned into fragments $f_1$ and $f_2$. Now, cutting requires passing information from $K$ sets of basis $\mathcal B$ through the cutting point, namely, measuring upstream qubits and preparing downstream qubits with respect to operators 

To resolve this issue, we note that each operator $M$ admits spectral decomposition. Letting 
\begin{align}
    \bm r = \begin{pmatrix} r_{c(1)} & r_{c(2)} & \dots & r_{c(K)} \end{pmatrix} \in \{\pm 1\}^K
\end{align}
be a tuple of eigenvalues, we define 
\begin{align}
    \bm M^{\bm r} = \begin{pmatrix} M_{c(1)}^{r(1)} & M_{c(2)}^{r(2)} & \dots & M_{c(K)}^{r(K)} \end{pmatrix}
\end{align}
to be the $\bm r$-th eigenstate of operator $\bm M$. Let $\bm s \in \{\pm 1\}^K$ and $\bm M^{\bm s}$ be similarly defined. Applying this decomposition gives the reconstruction formula in the bipartition case: 
\begin{align}
    \rho = \frac{1}{2^K} \sum_{\substack{\bm M \in \mathcal B^K, \\ \bm r, \bm s \in \{\pm 1\}^K}} \parity(\bm r)~\parity(\bm s) ~ \rho_{f_1}(\bm M^{\bm r}) \tensor \rho_{f_2}(\bm M ^{\bm s})
\end{align}
where $\parity(\bm r)$ denotes the parity of a string of eigenvalues, i.e., $\parity(\bm r) = \prod_i r_i$. The formula above lends itself to a measure-and-prepare scheme for realizing quantum circuit cutting: for each basis element $\bm M$, we measure the upstream circuit in the basis, prepare the downstream circuit into the eigenstates of the same basis, and reweight the outcome of the downstream circuit by the probability of observing the respective eigenstate upon measuring the upstream fragment.

Alternatively, for any desired quantum observable $O$, suppose the operator can be decomposed to accommodate the two fragments, i.e., $O = O_{f_1} \tensor O_{f_2}$ up to appropriate permutation of qubit indices.  Then, we can arrive at an expression for the expectation of the uncut circuit in terms of the fragments $\rho_{f_i}$ and their respective observables $O_{f_i}$: 
\begin{align}
    & \tr(O\rho) \nonumber \\
    &= \frac{1}{2^K} \sum_{\bm M, \bm r, \bm s} \parity(\bm r) \parity(\bm s) \tr \left( \left(O_{f_1} \tensor O_{f_2} \right) \left( \rho_{f_1} \tensor \rho_{f_2} \right) \right) \\
    & =\frac{1}{2^K} \sum_{\bm M, \bm r, \bm s} \parity(\bm r) \parity(\bm s)~ \tr \left( O_{f_1} \rho_{f_1}\right) \tr \left( O_{f_2} \rho_{f_2} \right) \label{eq:bipartite-cut-expval}
\end{align}
where we implicitly apply $\bm M$ to $\rho_{f_1}$ and $\rho_{f_2}$. 

Note that the decomposition assumption $O = O_{f_1} \tensor O_{f_2}$ is without loss of generality. For any choice of Hermitian operator $O$, one can expand it with respect to Pauli strings, i.e.,
\begin{align}
    O = \sum_{\bm S \in \mathcal B^{N}} a_S~ \bm S \label{eq:pauli-decomp}
\end{align}
for some set of real coefficients $\{a_S\}$. So, by linearity of the trace operator, we obtain a generalized expression for the expectation:
\begin{align}
    &\tr(O\rho) \nonumber \\
    &= \frac{1}{2^K} \sum_{\bm S, \bm M, \bm r, \bm s} 
    a_S \parity(\bm r) \parity(\bm s) \tr \left( \bm S_{f_1} \rho_{f_1}\right) \tr \left( \bm S_{f_2} \rho_{f_2} \right)
\end{align}
where $\bm S_{f_1}$ and $\bm S_{f_2}$ are the Pauli strings separated according to the circuit cutting scheme, i.e., $\bm S = \bm S_{f_1} \tensor \bm S_{f_2}$ under appropriate qubit permutations. 

We now formally define the \textit{golden circuit cutting point}.
\begin{definition}
Consider an $N$-qubit circuit amenable to bipartition with $K$ cuts. We're interested in the expectation of the circuit-induced state with respect to some quantum observable $O = O_{f_1} \tensor O_{f_2}$. The cutting scheme admits a \emph{golden cutting point} if there exists $\bm M_* \in \mathcal B^K$ such that 
\begin{align}
    \sum_{\bm r \in \{\pm 1\}^K} \parity(\bm r) ~\tr \left( O_{f_1} \rho_{f_1}(\bm M_*^{\bm r}) \right) = 0 \label{def:golden-cut}
\end{align}
\end{definition}

More simply put, a golden cutting point refers to the existence of a basis element that leads to systematic cancellations. Golden cutting points neither necessarily exists nor are unique. For each such basis element, one does not need to execute the circuit downstream of the cut with initialization corresponding to the neglected basis. 

Golden cutting points can be constructed via circuit design by restricting the set of rotations allowed prior to cutting so long as the structure of the quantum circuit permits. However, one should not expect such a property to hold for an arbitrary algorithm. Thus, we propose an ``online'' scheme for detecting the existence of golden cutting points and establish robustness of misidentifying golden cuts. 

\subsection{Identifying Golden Cutting Points}\label{sec:online-detect}

With no knowledge of the existence of golden cutting points, one must execute each of the $4^K$ upstream circuits and another $4^K$ downstream circuits ($f_1$ and $f_2$ respectively in the bipartition case). To detect golden cutting points in the absence of \textit{a priori} knowledge, we propose to conduct a hypothesis test for each of the $4^K$ upstream circuits, determine whether there is statistically significant evidence for the existence of a golden cutting point, then run the corresponding downstream circuit. 

Denote 
\begin{align}
    \tau = \sum_{\bm r \in \{\pm 1\}^K} \parity(\bm r) ~\tr \left( O_{f_1} \rho_{f_1}(\bm M_*^{\bm r}) \right)
\end{align}
as the quantity we want to verify magnitude of. Inheriting the bipartition assumption from the previous section, we can rewrite $\tau$ as estimating the expectation of a larger observable
\begin{align}
    \tau = \tr \left( \left(O_{f_1} \tensor \bm M_* \right) U_{f_1} \ket{\bm 0}\bra{\bm 0} U_{f_1}^\dagger \right).
\end{align}
where, again, $U_{f_1}\ket{\bm 0}$ is the state induced by the upstream fragment. Writing $\tau$ in this form allows us to employ standard techniques for estimating quantum observables. 

Assume for convenience that $O_{f_1}$ is a Pauli-string. To estimate the expectation, we measure each qubit in the respective Pauli basis (by performing a rotation $V$) $m$ times and obtain an ensemble of bitstring samples $\{\hat b_i\}_{i=1}^m$. Therefore, we can estimate $\tau$ be constructing
\begin{align}
    \hat \tau = \frac{1}{m} \sum_{i=1}^m \bra{\hat b_i} V^\dagger (O_{f_1} \tensor \bm M) V \ket{\hat b_i} \label{eq:estimator}.
\end{align}
Alternatively, one can think of estimating the distribution of strings of eigenvalues (which we'll call eigenstrings for short) induced by the measurements. Write $p_{b}$ for the probability of obtaining eigenstring $\bm b$, and $\hat p_b$ for the empirical probability. Moreover, let $\bm p$ and $\hat{\bm p}$ denote the vector of probabilities. Then, we can sum the parity of each eigenstring weighted by the (empirical) probability to arrive at an alternative expression for the estimator: 
\begin{align}
    \hat \tau = \sum_{\bm b \in \{\pm 1\}^{N_{f_1}}} \parity(\bm b)~ \hat p_b. \label{eq:pauli-exp}
\end{align}

The proposition below establishes the standard error and asymptotic normality, which are convenient for hypothesis testing.

\begin{proposition}[Asymptotic Normality]
    Given a circuit amenable to the bipartite circuit cutting scheme (cf.~\ref{sec:bipartite}), let $\hat \tau$ be the estimator of $\tau$ expressed in Equation \ref{eq:estimator} and let $O_{f_1}$ admit decomposition as in Equation \ref{eq:pauli-decomp}. Then, $\hat \tau$ is asymptotically normal, i.e.,
    \begin{align}
        \frac{\hat \tau - \tau}{\std(\hat \tau)} \to \mathcal N(0,1)
    \end{align}
    where the standard deviation of the estimator is expressed as
    \begin{align}
        \std(\hat \tau) = \left( \sum_{S \in \mathcal B^{N_{f_1}}} \frac{a_S^2}{N}~ \bm \chi^\intercal (\diag(\hat{\bm p_S}) - \hat{\bm p_S}\hat{\bm p_S}^\intercal) \bm \chi \right)^{1/2}
        \label{eq:std-error}
    \end{align}
    and $\bm \chi$ is the vector of parities, i.e., $\bm \chi_b = \parity(b).$
\end{proposition}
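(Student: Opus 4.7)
My plan is to recognize $\hat\tau$ as a finite linear combination of sample means of bounded i.i.d.\ random variables, apply the classical Lindeberg--L\'evy central limit theorem, and then invoke Slutsky's theorem to substitute the empirical probabilities into the standard error.

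First, I would use Equation~\ref{eq:pauli-decomp} to expand $O_{f_1} = \sum_{\bm S} a_S \bm S$, so that the estimator decomposes as $\hat \tau = \sum_{\bm S} a_S \hat \tau_S$, where $\hat \tau_S$ is the sample parity obtained from $m$ independent shots measured in the basis $\bm S \tensor \bm M_*$. Each per-shot parity is a $\pm 1$-valued random variable with mean $\tau_S = \tr((\bm S \tensor \bm M_*) U_{f_1} \ket{\bm 0}\bra{\bm 0} U_{f_1}^\dagger)$ and hence bounded variance, so a univariate CLT applies to each $\hat \tau_S$ after rescaling by $\sqrt{m}$.

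Second, I would compute the single-shot variance explicitly. Writing the parity observation as $\bm \chi^\intercal \bm e_{\hat b}$, where $\bm e_{\hat b}$ is the indicator vector of the observed eigenstring, its covariance under a multinomial draw is $\Cov(\bm e_{\hat b}) = \diag(\bm p_S) - \bm p_S \bm p_S^\intercal$. Consequently, $\Var(\hat \tau_S) = \frac{1}{m}\, \bm \chi^\intercal (\diag(\bm p_S) - \bm p_S \bm p_S^\intercal) \bm \chi$. Because shots for different Pauli-string bases are collected in independent batches, the $\hat \tau_S$ are mutually independent, so the variance of the linear combination is $\sum_{\bm S} a_S^2 \Var(\hat \tau_S)$, matching the square of Equation~\ref{eq:std-error} once $\bm p_S$ is replaced by its empirical counterpart.

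Third, joint asymptotic normality of $(\hat \tau_S)_{\bm S}$ follows from the independent univariate CLTs, and taking a linear combination preserves normality; this yields $(\hat\tau - \tau)/\std(\hat\tau) \to \mathcal{N}(0,1)$ with the true $\bm p_S$ in place of $\hat{\bm p}_S$. To pass to the plug-in form stated in the proposition, I would note that $\hat{\bm p}_S \to \bm p_S$ in probability by the weak law of large numbers and that the map $\bm p \mapsto \bm \chi^\intercal (\diag(\bm p) - \bm p \bm p^\intercal) \bm \chi$ is continuous, so the continuous mapping theorem together with Slutsky's theorem gives the studentized limit. The main obstacle is the bookkeeping in the second step---verifying that the indicator-vector covariance genuinely produces $\diag(\bm p) - \bm p \bm p^\intercal$, and cleanly justifying independence across distinct Pauli-string bases so that only the diagonal $a_S^2$ contributions survive and no cross-covariance terms remain in the final standard error.
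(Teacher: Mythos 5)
Your proposal takes essentially the same route as the paper: decompose $O_{f_1}$ into Pauli strings, compute the per-string variance via the multinomial covariance $\diag(\bm p_S) - \bm p_S \bm p_S^\intercal$ contracted with the parity vector $\bm\chi$, add variances across independent measurement batches, and invoke the CLT on the sample-average form of the estimator. Your explicit appeal to Slutsky's theorem and the continuous mapping theorem to justify the plug-in of $\hat{\bm p}_S$ into the standard-error formula is a small but genuine improvement in rigor over the paper's more casual ``using the empirical quantities'' step, and your per-$\bm S$ CLT argument handles the general (non-Pauli-string) observable case more cleanly than the paper's appeal to Equation~\ref{eq:estimator} alone.
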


\begin{proof}
We first consider the case were $O_{f_1}$ is a Pauli string $\bm S$, then proceed to generalize to arbitrary quantum observables. 

Estimating the expectation of the Pauli string $\bm S$ with $m$ shots, using the formalism presented in Equation \ref{eq:pauli-exp}, gives the standard error
\begin{align}
    \Var(\hat \tau) &= \Cov \left( \sum_b \hat p_b~ \parity(b), \sum_{b'} \hat p_{b'}~ \parity(b') \right) \\
    &= \sum_{b,b'} \Cov(\hat p_b, \hat p_{b'}) \parity(b) \parity(b') \\
    &= \frac{1}{m}  \left( \sum_{b=b'} p_b(1-p_b)  - \sum_{b \neq b'} p_b p_{b'} \parity(b) \parity(b') \right) 
\end{align}
where the third equality follows from the covariance of multinomial distributions. Using the empirical quantities for each $p_b$ and writing in matrix form gives the estimated standard error
\begin{align}
    \std(\hat \tau) = \sqrt{ \frac{1}{m} \bm \chi^\intercal (\diag(\hat{\bm p_S}) - \hat{\bm p_S}\hat{\bm p_S}^\intercal) \bm \chi}.
\end{align}

Consider the decomposition in Equation \ref{eq:pauli-decomp}. As the estimation procedure runs independently, variances add. Hence, we arrive at the form in Equation \ref{eq:std-error}.

Lastly, asymptotic normality is established by the equivalence formulation presented in Equation \ref{eq:estimator} and Equation \ref{eq:pauli-exp}. In the form of Equation \ref{eq:estimator}, we can express $\hat \tau$ as the average over independent samples. In combination with finiteness of $O_{f_1}$, the Central Limit Theorem holds, implying asymptotic normality of $\hat \tau$.
\end{proof}

Using the above proposition, we can deduce an algorithm for detecting golden cutting points. For each basis element in $\mathcal B^K$, we will compute $\hat \tau$ and perform a statistical test for whether $\tau \neq 0$. If we've gathered statistically significant evidence for $\tau$ being non-zero, then we would run the downstream fragment parameterized by the respective basis element. On the other hand, if $\hat \tau$ is sufficiently close to zero, then we classify the cut as a golden cutting point and proceed without running the corresponding downstream fragment. Using $\Phi$ to denote the CDF of a standard Gaussian, we summarize the above procedure in Algorithm \ref{alg:online-detection}.

\begin{algorithm}
\caption{Online detection of golden cutting points}\label{alg:online-detection}
\KwIn{fragments $f_1$ and $f_2$, observable $O_{f_1}$, $O_{f_2}$, significance level $\alpha \in (0,1)$}
\KwOut{Expectation $\tr(O\rho)$}
\For{$\bm M \in \mathcal B^K$}{
Compute $\hat \tau$ using Eqn.~\ref{eq:pauli-exp} \\
\If{$|\hat \tau| > \Phi\inv (1-\alpha) \cdot \std(\hat \tau)$} {
Reject the hypothesis and compute $\parity(\bm s)~\tr(O_{f_2} \rho_{f_2}(\bm M^{\bm s})$ for all $s \in \{-1,+1\}^K$
}
\Else{
Fail to reject the hypothesis and set quantities related to $\bm M$ to zero
}}
Reconstruct the full expectation from fragment data using Eqn. \ref{eq:bipartite-cut-expval}
\end{algorithm}

\begin{figure*}[t]
    \centering
    \includegraphics[width=0.8\linewidth]{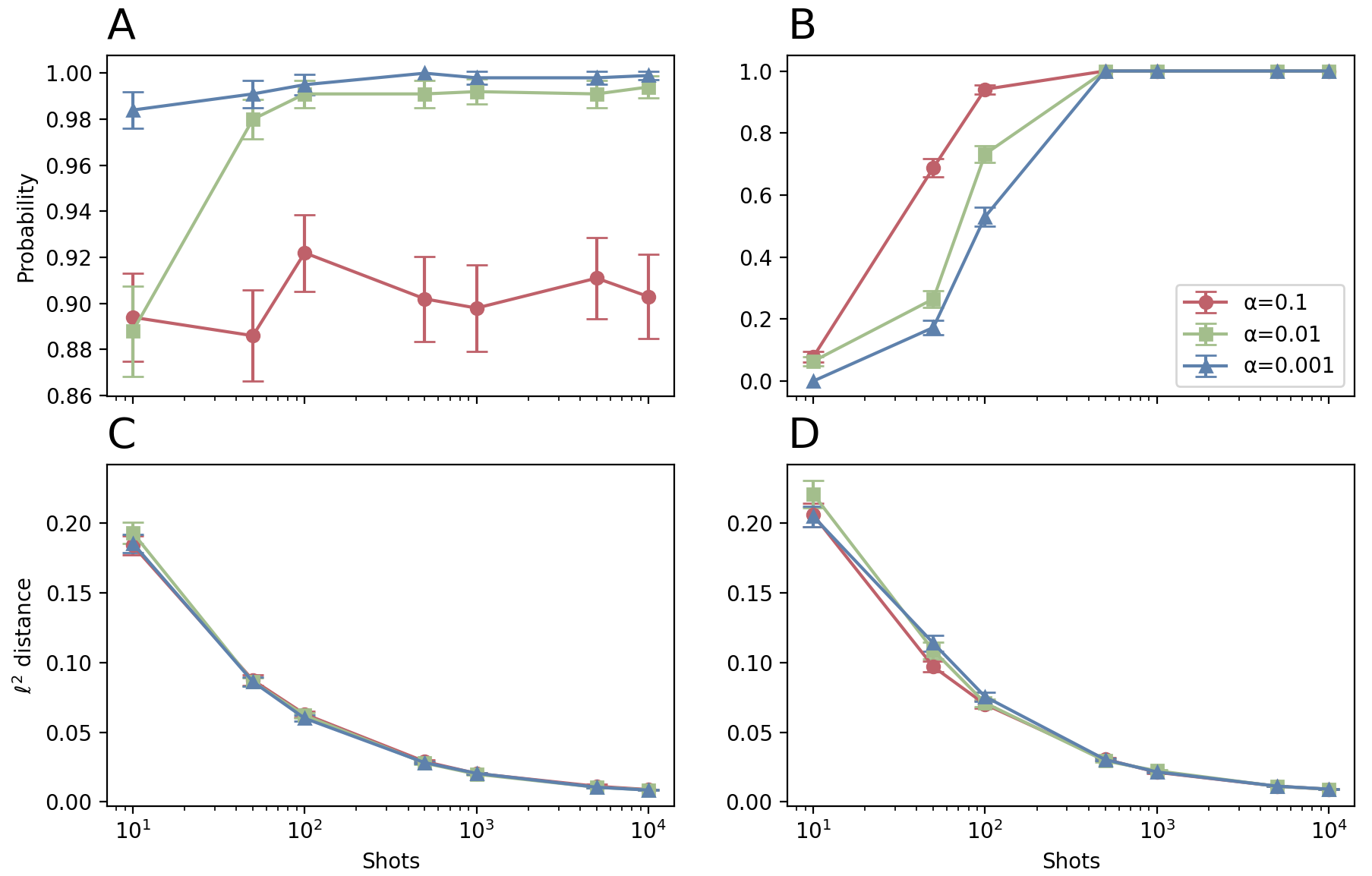}
    \caption{Behavior of Algorithm \ref{alg:online-detection} for varied shots and $\alpha$ levels averaged over 1000 independent trials. The rate of true positives (\textbf{A}) is consistent with the specified $\alpha$, and the true negative rate (\textbf{B}) converges to one as the number of shots increases. The reconstruction error monotonically vanishes both in the presence (\textbf{C}) and the absence (\textbf{D}) of golden cuts.}
    \label{fig:multi_plot}
\end{figure*}

% For basis elements that induces a small but non-zero $\tau$, false-negatives might occur, resulting in accidentally zeroing out unwanted quantity. We can acquire a rough sample-complexity bound by deriving a large deviation bound using the asymptotic distribution of $\hat \tau$, presented in the remark below.

While we can control the rate of correctly identifying golden cutting points by the significance level $\alpha$, we would also like to derive a way of controlling the rate of false negatives. In fact, falsely identifying a non-golden cutting point as golden is more problematic than falsely identifying golden as non-golden. This is because in the latter case the reconstructed bitstring distribution will likely still be within acceptance error ranges, but that's not the case for the former. Thus, we hope to lower the probability of false negatives by taking a sufficient number of shots during the estimation procedure.

\begin{remark}\label{thm:remark}
As we have showed that the estimator converges weakly to a normal distribution, we will assume Gaussianity to facilitate analysis. Suppose $\hat \tau \sim \mathcal N(\tau, b^2/m)$ where $b = \sqrt{m}\cdot \std(\hat \tau)$. We hope that for a basis element where $\tau > \epsilon > 0$, the hypothesis testing scheme would reject it, perhaps with a small probability of error $\delta$. By the Chernoff bound, we know that 
\begin{align}
    \Pr(|\hat \tau - \tau| > \epsilon) \leq 2e^{-m\epsilon^2/2b^2} = \delta.
\end{align}
Thus, to estimate $\tau$ to any desired $\epsilon$ accuracy with probability $1 - \delta$, we need
\begin{align}
    m \geq \frac{2b^2}{\epsilon^2} \log \frac{2}{\delta}
\end{align}
measurements. Since $b$ is not known a priori, we can upper bound it by  
\begin{align}
    \bm 1^\intercal (\diag(\bm q) + \bm q \bm q^\intercal) \bm 1.
\end{align}
The distribution $\bm q$ that maximizes the above quantity is the uniform distribution. Thus, we can arrive at a definitive upper bound $b \leq \frac{3}{2}(1-2^{-N_{f_1}})$. 

Alternatively, one can interpret the proposed sample complexity as accepting an $\epsilon$ margin for Equation \ref{def:golden-cut} in the sense that we declare a cut is golden if $|\hat \tau| < \epsilon$, thereby accepting an additional additive error of magnitude $\mathcal O(\epsilon)$ to the result of the circuit reconstruction. On the other hand, if $\tau > \epsilon$, we wish to identify it with probability $1 - \delta$. Note that in the limit of $m \to \infty$, variance vanishes and the true positive and negative rates approach one.
\end{remark}

\section{Experiments}
\label{sec:experiments}

In this section, we numerically demonstrate the applicability of Algorithm \ref{alg:online-detection}. The algorithm was implemented in Qiskit and executed on the Aer simulator. We examine its statistical (Section \ref{sec:numerics-stats}) and runtime (Section \ref{sec:numerics-time}) properties through studying its dependency on the number of shots and the significance level $\alpha$.

\subsection{Statistical Analysis}
\label{sec:numerics-stats}

As is standard in analyzing binary decisions, we analyzed our statistical test by providing instances when the null hypothesis is true and when it's false. Specifically, we provided circuits either with or without golden cuts, and observed the probability of correctly identifying the existence (or lack) of golden cuts.

We consider a simple three-qubit circuit amenable to cutting on the second qubit. First, we generated a circuit containing a golden cut by appending two $R_X$ gates on the first and second qubits, then an $R_Y$ gate only on the first qubit---this is $U_{f_1}$ from Figure \ref{fig:cut_diagram}, and we let $K=1$. Rotation angle $\theta$ was set to a value far from zero ($\theta = 0.5$) to ensure $X$ would be the only golden cutting axis and to focus only on statistical shot noise. To generate a circuit known to not contain a golden cutting point, we applied the same procedure then appended an additional $R_Y$ gate on the second qubit (the qubit being cut) as well. Finally, we generated $U_{f_2}$ randomly across qubits 2 and 3 using Qiskit's \verb|random_circuit| function. Once the circuit was constructed, for each shot count-$\alpha$ pair, we repeated 1000 independent executions of Algorithm \ref{alg:online-detection} and collected the frequency at which the algorithm correctly identified the circuit structure. Results are displayed in subplots A and B of Figure \ref{fig:multi_plot}.

Subplot A shows the probability of failing to reject the null hypothesis given the null hypothesis is true, which should be exactly $\alpha$. The numerics aligned with the theoretical value with the exception of cases with low shot counts. This can be understood as our estimator is built upon asymptotic statements on the sampling distribution. Subplot B demonstrates the rate of true negatives. We can see that, given sufficient samples, we always identified non-golden cuts correctly. For lower significance values, we are more prone to rejecting the null hypothesis, explaining the faster rate of convergence towards 1 for smaller $\alpha$-values.

We also examined the quality of the reconstruction by calculating the distance between the empirical and theoretical bitstring distributions. The theoretical distribution is obtained by taking large number of shots without circuit cutting. We employed the $\ell^2$-distance to quantify how far apart two distributions are, i.e., for discrete distributions $\bm p$ and $\bm q$,
\begin{align} 
    d(\bm p, \bm q) = \sqrt{\sum_{i} (p_i - q_i)^2}.
\end{align}

Again, we executed Algorithm \ref{alg:online-detection} independently for 1000 trials and collected the $\ell^2$ distance between the empirical, reconstructed bitstring distribution and the respective theoretical distribution at varying numbers of circuit execution shots and alpha levels. Results are found in subplots C and D of Figure \ref{fig:multi_plot}.

In general, the reconstruction error decreases monotonically with the number of shots, and there was not a significant difference among choices of $\alpha$. In the case of low shot count and no existing golden cut, there seems to be more statistical fluctuation when reconstructing. Considering subplot B above, we know that this region is prone to false negatives, and thus neglecting bases that should not be neglected. 

\subsection{Runtime Analysis}
\label{sec:numerics-time}

\renewcommand{\arraystretch}{1.5}
\begin{table}
\centering
\begin{tabular}{|c|c|c|c|}
    \hline
    \multicolumn{2}{|c|}{} & \multicolumn{2}{|c|}{run time (sec)} \\
    \hline
    \multicolumn{2}{|c|}{} & w/ optimization & w/o optimization \\
    \hline
    \multirow{3}{*}{$\alpha$} & $10^{-1}$ & 0.0771$\pm$0.0006 & 0.0959$\pm$0.0004 \\
    \cline{2-4}
    & $10^{-2}$ & 0.0749$\pm$0.0004 & 0.0961$\pm$0.0004 \\
    \cline{2-4}
    & $10^{-3}$ & 0.0747$\pm$0.0003 & 0.0962$\pm$0.0004 \\
    \hline
\end{tabular}
\caption{Runtime comparison between circuit cutting procedures with and without optimization from Algorithm \ref{alg:online-detection}. Independent trials were repeated 1000 times. We can see that neglecting basis elements consistently run faster despite spending computing overhead on hypothesis testing.}
\label{tab:runtime-tab} 
\end{table}

To obtain timed runtime values, we generated a circuit with a golden cut and executed Algorithm \ref{alg:online-detection}. Recall that, depending on the results of the hypothesis test on the upstream circuit, the downstream circuit might not be executed for certain bases. Then, we ran the same cut circuit and performed the usual reconstruction routine without hypothesis testing or golden cutting optimization. Both of these processes were timed individually over 1000 trials and at varying alpha levels. Results for this can be found in Table \ref{tab:runtime-tab}. In general, we see roughly a $20\%$ decrease in runtime upon performing the optimization. As $\alpha$ decreases, we tend to reject the null hypothesis more often, thereby executing the downstream circuit more often. 

\section{Conclusion}
\label{sec:conclusion}

In this manuscript, we proposed an online detection algorithm for finding golden cutting points---circuit structures that induce neglectable basis elements during circuit reconstruction. The detection algorithm was built on performing a hypothesis test for each basis element, and executing the respective downstream circuit only if the null hypothesis is rejected. The detection does not require additional circuit executions. We showed numerically that, under sufficient number of shots, golden cuts will always be detected and there is no drastic difference among choices of significance levels.

Empirically testing the algorithm on quantum hardware and at large scale---both of which contribute additional noise that can affect the quality of estimators---are left to future work. Another immediate open question is the prevalence of circuit structure amenable to golden cuts in applicable circuits. For instance, we found that the SupermarQ~\cite{supermarq} and QASMBench~\cite{qasmbench} benchmark suites both feature a handful of benchmarks that exhibit this circuit structure. Variational circuits whose ansatz can be flexible might also be a candidate to apply golden cutting point restrictions for better scalability.

% \begin{figure}
%     \centering
%     \includegraphics[width=0.5\textwidth]{}
%     \caption{Plot displaying data on the prevalence of golden cutting points at differing depths of the upstream circuit. The data shown are averaged over n=1000 trials.}
%     \label{fig:prevalence_plot}
% \end{figure}

% Finally, to estimate the prevalence of golden cutting points in circuits of varying depths, we generated random circuits to cut where the upstream circuit was created by calling Qiskit's \verb|random_circuit| function with varying depths. This was then simulated and the hypothesis testing method from algorithm \ref{alg:online-detection} was applied. We recorded the proportion of circuits found with a golden cutting point at various alpha levels in Figure \ref{fig:prevalence_plot}. In all of these trials, the number of shots for the simulator was fixed to $10^4$.

% We were also interested in the prevalence of golden cutting point in more specific, useful circuits. So we looked at two benchmark suites: SupermarQ\cite{supermarq} and QASMBench\cite{qasmbench}. SupermarQ contains eight benchmark circuits, and of those we count five that have a golden cutting point. We weren't able to look through all of the circuits in QASMBench, so we took a random sample of 12 circuits (6 small and 6 medium). Of these we count seven of 12 (three small and four medium) circuits with a golden cutting point.

\section*{Acknowledgements}
This material is based upon work supported by the U.S. Department of Energy, Office of Science, National Quantum Information Science Research Centers. The colors used in figures were based on Nord Theme~\cite{nord}. 
%Daniel T. Chen and Ethan H. Hansen contributed equally to this work.

\bibliographystyle{IEEEtran}
\bibliography{IEEEabrv,ref}

\end{document}